\tikzset{
   n/.style= {circle,fill,inner sep=1.5pt,node distance=2cm}
  ,acc/.style={circle,draw,inner sep=3pt,node distance=2cm}
  ,phantom/.style={circle},
  ,arr/.style={->, >=stealth, semithick, shorten <= 3pt, shorten >= 3pt}
}
\newcommand{\sem}[1]{\llbracket #1 \rrbracket}
\newcommand{\takeout}[1]{\empty}
\theoremstyle{plain}
\newtheorem{thm}{Theorem}
\newtheorem{lemma}[thm]{Lemma}
\theoremstyle{definition}
\newtheorem{definition}[thm]{Definition}
\title{Faster Game Solving by Fixpoint Acceleration}
\author{Daniel Hausmann\footnote{Supported by  the ERC Consolidator grant D-SynMA (No.
	772459)}
\institute{University of Liverpool, United Kingdom}
\email{hausmann@liverpool.ac.uk}
}
\begin{document}

\maketitle
\begin{abstract}
We propose a method for solving parity games with acyclic (DAG) sub-structures by computing nested fixpoints of a DAG attractor function that lives over the non-DAG parts of the game, thereby restricting the domain of the involved fixpoint operators. Intuitively, this corresponds to accelerating fixpoint computation by inlining cycle-free parts during the solution of parity games, leading to earlier convergence. We also present an economic later-appearance-record construction that takes Emerson-Lei games to parity games, and show that it preserves DAG sub-structures; it follows that the proposed method can be used also for the accelerated solution of Emerson-Lei games.
\end{abstract}

\section{Background}

The analysis of infinite duration games is of central importance to
various problems in theoretical computer science such as formal verification (model checking), logical reasoning (satisfiability checking), or automated program construction (reactive synthesis). Previous work has shown
how fixpoint expressions can be used to characterize winning in such games,
which in turn has enabled the development of symbolic game solving algorithms that circumvent the state-space explosion to some extent and therefore
perform reasonably well in practice in spite of the high complexity of the
considered problems.

In this work, we build on the close connection between game solving and
fixpoint computation to obtain a method that accelerates the solution of
parity games with acyclic sub-structures. Intuitively, we base our method
on the observation that cycle-free parts in parity games can be dealt with
summarily by using the computation of attractors in place of handling
each node individually. In games that are largely cycle-free, but in which
attraction along cycle-free parts can be evaluated without exploring most of
their nodes, this significantly speeds up the game solution process.

The close relation between games and fixpoint expression has been well
researched. On one hand, it has been shown that winning regions in games with various objectives can be characterized by fixpoint expressions ~\cite{BruseFL14,BloemJPPS12,PitermanP06,HausmannEA23}. The result for parity games (e.g.~\cite{BruseFL14}) arguably is the best-known case: it has been shown
that the winning region in parity games with $k$ priorities can be specified
by a modal $\mu$-calculus formula (sometimes referred to as \emph{Walukiewicz formulas}) with alternation-depth $k$; this result connects parity game solving and model checking for the $\mu$-calculus. It corresponds to arrow (1) in the diagram below, which is meant to illustrate the context of our contribution.
\begin{center}

\begin{footnotesize}
\begin{tikzpicture}[node distance=1.0cm,>=stealth',auto,semithick,        shorten > = 1pt]

  \tikzstyle{place}=[circle,thick,draw=blue!75,fill=blue!20,minimum size=6mm]
  \tikzstyle{red place}=[place,draw=red!75,fill=red!20]
  \tikzstyle{transition}=[rectangle,thick,draw=black!75,
  			  fill=black!20,minimum size=4mm]

%  \tikzstyle{every label}=[olive]
    \tikzstyle{every state}=[
        draw = black,
        thick,
        fill = white,
        minimum size = 4mm
    ]

    \node (1) at (0,1) {fixpoint expressions};
%    \node[label={below: \tiny{(NSA)}}] (2) at (2.5,1)
%{$\mathcal{N}_{\varphi_{\mathrm{safety}}}$};
    \node (3) at (5,1)
    {parity games};
    \node (5) at (9,1)
    {Emerson-Lei games};

    \path[->] (1) edge  [bend right=20] node[below]  {fixpoint games (2)} (3);
    \path[->] (3) edge  [bend right=20] node[above]  {Walukiewicz formulas (1)} (1);
    \path[->] (5) edge  node[above]  {LAR (3)}  (3);
%    \path[->] (2) edge  node  {} (3);

\end{tikzpicture}
\end{footnotesize}
\end{center}

A fruitful reduction in the converse direction is more recent. \emph{Fixpoint games}, that is, parity games of a certain structure (defined in detail below), have
been used to characterize the semantics of hierarchical fixpoint equation systems in terms of games~\cite{BaldanKMP19}. While this reduction (corresponding to arrow (2) in the diagram) incurs blow-up that is exponential in 
the size of the domain of the equation system, it still proves helpful as it allows
for reasoning about nested least and greatest fixpoints in terms of (strategies in) parity games.
Fixpoint games have also been used to lift the recent breakthrough 
quasipolynomial result for parity game solving to
the evaluation of general fixpoint equation systems~\cite{HausmannSchroeder21}.
%The following result appears to be folklore and has been used extensively in ~\cite{HausmannSchroeder21, HausmannSchroder19, Piterman}, but we make it explicit here.

Later-appearance-record (LAR) constructions have been used to reduce games with richer winning objectives to parity games; this subsumes in particular the LAR-reduction for Emerson-Lei games (arrow (3)), which have recently garnered attention due to their succinctness and favorable closure properties.

In this context, the contribution of the current work is two-fold. First, we show
how for parity games with cycle-free sub-structures, Walukiewicz' construction 
can be adapted to use multi-step attraction along cycle-free parts (we call such parts \emph{DAGs in games}) in place of
one-step attraction. In consequence, the domain of the resulting fixpoint
expression can be restricted to the parts of the game that are not cycle-free,
thereby accelerating convergence in the solution of the fixpoint expression.
If multi-step attraction can be evaluated without visiting most cycle-free nodes
(which is the case, e.g., when cycle-free parts encode predicates that can be efficiently evaluated), this trick has been shown to significantly speed up game solution.
Applications can be found in both model checking~\cite{HausmannS19} and satisfiability checking~\cite{HausmannS19a}
for generalized $\mu$-calculi in which satisfaction and joint satisfiability,
respectively,
of the modalities can be harder to verify than in the standard $\mu$-calculus.

In addition to that, we propose a later-appearance-record construction for the
transformation of Emerson-Lei games to equivalent parity games (corresponding to arrow (3) in the above diagram) and show that the
reduction preserves cycle-free sub-structures, thereby enabling usage
of the proposed acceleration method also for games with full Emerson-Lei objectives. 

\section{Games and Fixpoint Expressions}\label{sec:gamesfp}

We start by introducing notions of games and fixpoint expressions
pertaining to the acceleration method proposed in Sections~\ref{sec:dag} and~\ref{sec:solve} below.

\emph{Games.}
An \emph{arena} is a graph $A=(V_\exists,V_\forall,E)$, consisting of a set $V=V_\exists\cup V_\forall$ of \emph{nodes} and a set $E\subseteq V\times V$ of \emph{moves}; furthermore,
the set of nodes is partitioned
into the sets $V_\exists$ and $V_\forall$
of nodes \emph{owned} by player $\exists$ and
by player $\forall$, respectively.
We write $E(v)=\{w \in V\mid (v,w)\in E\}$ for the set of nodes reachable from
node $v\in V$ by a single move. We assume without loss of generality that
every node has at least one outgoing edge, that is, that $E(v)\neq\emptyset$
for all $v\in V$.
A \emph{play} $\pi=v_0 v_1\ldots$
on $A$ is a (finite or infinite) sequence of nodes such that
$v_{i+1}\in E(v_i)$ for all $i\geq 0$. 
By abuse of notation we denote by $A^\omega$ the set of infinite plays
on $A$ and by $A^+$ the set of finite (nonempty) plays on $A$.
A \emph{strategy} for player $i\in\{\exists,\forall\}$
is a function $\sigma:V^*\cdot V_i\to V$ that assigns a node
$\sigma(\pi v)\in V$ to every finite play $\pi v$ that ends in a node
$v\in V_i$. A strategy $\sigma$ for player $i$ is said to be \emph{positional} if the prescribed moves
depend only on the last nodes in plays; formally, this is the case if we have $\sigma(\pi v)=\sigma(\pi' v)$
for all $\pi,\pi'\in V^*$ and $v\in V_i$.
A play $\pi=v_0 v_1\ldots$ then is \emph{compatible} with a strategy $\sigma$ for
player $i\in\{\exists,\forall\}$ if
for all $j\geq 0$ such that $v_j\in V_i$, we have $v_{j+1}=\sigma(v_0 v_1\ldots v_j)$.
A strategy for player
$i\in\{\exists,\forall\}$ \emph{with memory $M$} is a tuple
$\sigma=(M,m_0,\mathsf{update}:M\times E\to M,\mathsf{move}:V_i\times M\to V)$,
where $M$ is some set of \emph{memory values}, $m_0\in M$ is the initial memory value,
the \emph{update function} $\mathsf{update}$ assigns the outcome $\mathsf{update}(m,e)\in M$ of updating the
memory value $m\in M$ according to the effects of taking the move $e\in E$, and
the moving function $\mathsf{move}$ prescribes a single move $(v,\mathsf{move}(v,m))\in E$ to
every game node $v\in V_i$ that is owned by player $i$, depending on the memory value $m$.
We extend $\mathsf{update}$ to finite plays $\pi$ by putting
$\mathsf{update}(m,\pi)=m$ in the base case that $\pi$ consists of a
single node,
and by putting $\mathsf{update}(m,\pi)=\mathsf{update}(\mathsf{update}(m,\tau),(v,w))$
if $\pi$ is of the shape $\tau v w$, that is, contains at least two nodes.
Then we obtain a strategy (without explicit memory) $\tau_\sigma:V^*\cdot V_i\to V$ from a strategy $\sigma=(M,m_0,\mathsf{update}:M\times E\to M,\mathsf{move}:V_i\times M\to V)$ with memory by putting, for all
$v_0 v_1\ldots v_i\in V^*\cdot V_i$,
\begin{align*}
\tau_\sigma(v_0 v_1\ldots v_i)=\mathsf{move}(v_i,\mathsf{update}(m_0,v_0 v_1\ldots v_i)).
\end{align*}

In this work we consider two types of objectives: 
\emph{parity objectives} and, more generally, \emph{Emerson-Lei objectives}.

Emerson-Lei objectives~\cite{HunterD05} are
specified relative to a coloring function $\gamma_C:V\to 2^C$ (for some set
$C$ of colors) that assigns a set
$\gamma_C(v)\subseteq C$ of colors to each node $v\in V$. A play $\pi=v_0v_1\ldots$ then induces a
sequence
$\gamma_C(\pi)=\gamma_C(v_0)\gamma_C(v_1)\ldots$ of sets of colors.
Emerson-Lei objectives are given as Boolean formulas
$\varphi_C\in\mathbb{B}(\{\mathsf{Inf}(c)\mid c\in C\})$
over atoms of the shape $\mathsf{Inf}(c)$; throughout, we write
$\mathsf{Fin}(c)$ for $\neg (\mathsf{Inf}(c))$.
Such formulas are interpreted over infinite sequences
$\gamma_0\gamma_1\ldots$ of sets of colors. We
put $\gamma_0\gamma_1\ldots\models \mathsf{Inf}~c$ if and only if there are infinitely many positions $i$ such that $c\in\gamma_i$; satisfaction of Boolean operators is defined in the usual way.
Then a play $\pi$ on $A$ \emph{satisfies} the objective $\varphi_C$ if and only if
$\gamma_C(\pi)\models \varphi_C$ and we define the Emerson-Lei objective induced by $\gamma_C$
and $\varphi_C$ by putting
\begin{align*}
\alpha_{\gamma_C,\varphi_C}=\{\pi\in V^\omega\mid\gamma_C(\pi)\models\varphi_C\}.
\end{align*}

An \emph{Emerson-Lei game} is a tuple $G=(A,\alpha)$, where
$A=(V_\exists,V_\forall,E)$ is an arena and $\alpha$ is an Emerson-Lei objective.
A strategy $\sigma$ for player $\exists$ is \emph{winning} at some node $v\in V$
if all infinite plays that start at $v$ and are compatible with $\sigma$ satisfy the
objective $\alpha$.
A strategy $\tau$ for player $\forall$ is defined winning dually.
The \emph{winning region} $\mathsf{Win}_G$ of player $\exists$ in the game $G$ is the set of nodes for
which there is a winning strategy for player $\exists$.

As a special case, we define \emph{parity games} to be Emerson-Lei games $(A,\alpha)$ where $\alpha$ is the Emerson-Lei objective induced by a coloring function $\gamma:V\to 2^{\{p_1,\ldots,p_k\}}$
that assigns exactly one of the colors $\{p_1,\ldots,p_k\}$ to each game node,
and the formula
\begin{align*}
\textstyle\bigvee_{i \text{ even}}\mathsf{Inf}(p_i)\wedge \textstyle\bigwedge_{j>i} \mathsf{Fin}(p_j),
\end{align*}
expressing that for some even $i$, color $p_i$ is visited infinitely often
and no color $p_j$ such that $j$ is larger than $i$ is visited infinitely often.
We denote such games by $(V_\exists,V_\forall,E,\Omega:V\to\mathbb{N})$,
where $\Omega$ is a \emph{priority function}, assigning to each node $v\in V$
the number $\Omega(v)=i$ such that $\gamma(v)=\{p_i\}$.

Emerson-Lei games are known to be \emph{determined}, that is, every node in such games
is won by exactly one of the two players.
\emph{Solving} a game then amounts to computing the winner for each game node.
While the solution of Emerson-Lei games is known to be a \textsc{PSpace}-complete
problem~\cite{HunterD05}, the precise complexity of solving parity games
remains an open question; it is known however, that solving parity games is
in \textsc{NP}$\cap$\textsc{Co-NP} as well as in \textsc{QP} (that is,
can be done deterministically in quasipolynomial time).
Winning strategies in Emerson-Lei games may require exponential memory (specifically, 
$k!$, where $k$ is the number of colors), but parity games have positional strategies~\cite{DziembowskiJW97}.\medskip

\emph{Fixpoint Expressions.} Let $U$
be a set, $k$ a natural number
(we assume without loss of generality that $k$ is even) and let $f:\mathcal{P}(U)^{k}\to \mathcal{P}(U)$ be
a monotone function (such that $f(W_1,\ldots,W_k)\subseteq f(W'_1,\ldots,W'_k)$ whenever
$W_i\subseteq W'_i$ for all $1\leq i\leq k$).
This data induces a fixpoint expression
\begin{align*}
e=\nu X_{k}.\,\mu X_{k-1}.\,\ldots\,.\,\mu X_1.\, f(X_1,\ldots, X_k).
\end{align*}
The semantics of such fixpoint expressions is defined as usual, following
the Knaster-Tarski fixpoint theorem.
Formally, we put
\begin{align*}
\mathsf{LFP}\, g &= \bigcap\{W\subseteq U\mid g(W)\subseteq W\} &
\mathsf{GFP}\, g &= \bigcup\{W\subseteq U\mid W\subseteq G(W)\}
\end{align*}
for monotone functions $g:\mathcal{P}(U)\to \mathcal{P}(U)$ and define
\begin{align*}
\sem{e}_i(X_{i+1},\ldots X_{k})=
\begin{cases}
\mathsf{LFP}(\lambda X_i. \sem{e}_{i-1}(X_i,X_{i+1},\ldots X_{k})) & \text{ if $i$ is odd} \\
\mathsf{GFP}(\lambda X_i. \sem{e}_{i-1}(X_i,X_{i+1},\ldots X_{k})) & \text{ if $i$ is even}
\end{cases}
\end{align*}
for $1\leq i\leq k$ and $X_{i+1},\ldots, X_k\subseteq V$, where $\sem{e}_{0}(X_1,\ldots X_{k})=f(X_1,\ldots X_{k})$.
Then we say that $v\in V$ is \emph{contained} in $e$ (denoted $v\in e$)
if and only
if $v\in\sem{e}_k$; by abuse of notation, we denote
$\sem{e}_k$ just by $e$.\medskip

\emph{Relation between Games and Fixpoint Expressions.}
The winning region in parity games can be specified
by a fixpoint expression over the underlying game arena.
To see how this works, let $G=(V_\exists, V_\forall,E,\Omega)$ be a parity game
and define the following notation.
Let $\Omega^{-1}(i)=\{v\in V\mid \Omega(v)=i\}$ denote the set
of nodes that have priority $i$.
The \emph{controllable predecessor function}
$\mathsf{Cpre}:\mathcal{P}(V)\to \mathcal{P}(V)$ that computes one-step attraction for player $\exists$ in $G$
is defined, for $X\subseteq V$, by putting
\begin{align*}
\mathsf{Cpre}(X)=\{v\in V_\exists\mid E(v)\cap X\neq\emptyset\}\cup 
\{v\in V_\forall\mid E(v)\subset X\}.
\end{align*}
Thus $\mathsf{Cpre}(X)$ contains all nodes $v\in V_\exists$ for which
some outgoing move leads to $X$, as well as all nodes $v\in V_\forall$ for which
all outgoing moves lead to $X$; intuitively, this is the set of nodes from which player $\exists$ can force that $X$ is reached in one step of the game.
The characterization is given by the fixpoint expression
\begin{equation}
\mathsf{parity}_G=\nu X_{k}.\,\mu X_{k-1}.\,\ldots\,.\,\mu X_1.\, 
\bigvee_{1\leq i\leq k} \Omega^{-1}(i)\wedge\mathsf{Cpre}(X_i).
\end{equation}
The characterization result (e.g.~\cite{BruseFL14}) then is stated as follows.\medskip

\begin{lemma}\label{lem:parityfp}
We have $\mathsf{Win}_G=\mathsf{parity}_G$.
\end{lemma}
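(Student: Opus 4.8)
The plan is to establish the two inclusions $\mathsf{parity}_G\subseteq\mathsf{Win}_G$ and $\mathsf{Win}_G\subseteq\mathsf{parity}_G$ separately. The first (soundness) inclusion carries all of the content and is proved by a \emph{signature} argument that simultaneously extracts a positional winning strategy for player~$\exists$ from membership in the fixpoint; the converse (completeness) inclusion is then obtained for free by applying the first to the dual game and invoking determinacy. For soundness I would first record that, since $V$ is finite, every least fixpoint occurring in $\mathsf{parity}_G$ is reached after finitely many Knaster--Tarski approximation steps, bounded by $|V|$. Fixing $v\in\mathsf{parity}_G$ and unfolding only the least-fixpoint levels (the odd indices $1,3,\ldots,k-1$) via their approximants $\emptyset=X_i^0\subseteq X_i^1\subseteq\cdots$ assigns to $v$ a \emph{signature} $\mathsf{sig}(v)\in\nat^{k/2}$, whose $i$-th entry records the stage at which $v$ enters the $i$-th least fixpoint (with the outer variables already fixed to their fixpoint values). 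I order signatures lexicographically, with the outermost least fixpoint most significant, writing $\lesslex$ for this order. The strategy $\strat$ for player~$\exists$ is then positional: at $v\in V_\exists\cap\mathsf{parity}_G$ of priority $i$ the defining disjunct $\Omega^{-1}(i)\wedge\mathsf{Cpre}(X_i)$ witnesses a move to some $w\in\mathsf{parity}_G$, and among all such witnessing successors $\strat$ picks one of minimal signature.

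The heart of the argument is a \emph{progress lemma} describing how signatures evolve along any play $v_0v_1\cdots$ compatible with $\strat$: at each step the components indexed by odd priorities exceeding the current priority $i$ do not increase, the $i$-th component strictly decreases whenever $i$ is itself odd, and the components indexed by odd priorities below $i$ may be reset. I would prove this by unfolding the definition of $\mathsf{Cpre}$ at $\exists$- and $\forall$-nodes (using that at a $\forall$-node in $\mathsf{Cpre}(X_i)$ \emph{all} successors lie in $X_i\subseteq\mathsf{parity}_G$) together with monotonicity of $f$. Now let $i^\ast$ be the largest priority occurring infinitely often. Beyond some point no priority above $i^\ast$ is seen, so all signature components indexed by odd priorities above $i^\ast$ stabilise. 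Were $i^\ast$ odd, the progress lemma would force its component to strictly decrease infinitely often while never being reset from above, contradicting well-foundedness of $\nat$ under $\lesslex$ on that tail. Hence $i^\ast$ is even, the play satisfies the parity condition, and $\strat$ is winning, so $v\in\mathsf{Win}_G$.

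For completeness I pass to the dual game $\dual G$, obtained by exchanging ownership of the nodes and shifting every priority by one. Standard De~Morgan duality of nested fixpoints, together with the fact that the controllable predecessor for $\forall$ in $G$ is exactly $\mathsf{Cpre}$ for $\exists$ in $\dual G$, yields the complementation identity $V\setminus\mathsf{parity}_G=\mathsf{parity}_{\dual G}$ (up to the routine re-indexing forced by the priority shift). Applying the soundness inclusion already proved to $\dual G$ gives $\mathsf{parity}_{\dual G}\subseteq\mathsf{Win}_{\dual G}$, and since $\exists$ wins $\dual G$ from a node exactly when $\forall$ wins $G$ from it, determinacy yields $\mathsf{Win}_{\dual G}=V\setminus\mathsf{Win}_G$. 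Chaining these gives $V\setminus\mathsf{parity}_G\subseteq V\setminus\mathsf{Win}_G$, that is, $\mathsf{Win}_G\subseteq\mathsf{parity}_G$. Alternatively, the dual signature argument shows $V\setminus\mathsf{parity}_G\subseteq\mathsf{Win}^\forall_G$ directly, which together with the trivial disjointness of the two winning regions even re-derives determinacy as a byproduct.

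I expect the main obstacle to be the precise formulation and proof of the progress lemma. Because the signature is a tuple indexed by the \emph{nested} least fixpoints, one must check that the approximant stages assigned at the inner levels are well-defined relative to the fixed outer values and genuinely behave monotonically under the moves selected by $\strat$, and the case distinction on whether the current priority lies above, at, or below a given tracked component must be matched correctly against the lexicographic order. The only other delicate point is the bookkeeping in the duality step, where the priority shift changes the number of priorities by one and the outermost and innermost fixpoint types must be re-aligned; this is routine but easy to get off by one.
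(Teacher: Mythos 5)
The paper never proves this lemma: it states it as the known Walukiewicz characterization and cites the literature (e.g.\ Bruse--Falk--Lange), so there is no in-paper argument to compare against step by step. Your proposal is, in effect, the standard proof of that cited result --- the Emerson--Jutla/Walukiewicz signature argument for the inclusion $\mathsf{parity}_G\subseteq\mathsf{Win}_G$, followed by De~Morgan duality of the nested fixpoints plus an application of soundness to the dual game for the converse --- and it is correct in outline; it also buys more than the bare statement, since the extracted strategy is positional and, as you note, the dual signature argument re-derives positional determinacy rather than assuming it. One point of precision: your progress lemma is phrased componentwise (``components above $i$ do not increase \emph{and} the $i$-th component strictly decreases''), but what the fixpoint unfolding actually delivers is the lexicographic statement on the truncated signature: the witnessing successor $w$ satisfies $\mathsf{sig}_{\geq i}(w) \leq_{\mathit{lex}} \mathsf{sig}_{\geq i}(v)$ (strictly, when $i$ is odd), because $w$'s \emph{minimal} signature may trade a smaller outer component against a larger component at level $i$, so the componentwise claim can fail even though the lexicographic one holds. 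The tail argument at the maximal infinitely recurring priority $i^\ast$ only needs the lexicographic version (no increase after priorities above $i^\ast$ cease, infinitely many strict decreases if $i^\ast$ were odd, contradicting well-foundedness of $<_{\mathit{lex}}$ on tuples of naturals), so your proof goes through once the lemma is stated in that form --- a subtlety you flag yourself as the delicate point, correctly.
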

\medskip
Similar fixpoint characterizations of winning regions have been given for games with more involved
objectives, including GR[1] conditions~\cite{BloemJPPS12}, Rabin and Streett objectives~\cite{PitermanP06}, and most recently, also for full Emerson-Lei objectives~\cite{HausmannEA23}. While the characterizations of winning according to these more general objectives involve fixpoint expressions of a more general format (given by hierarchical systems of fixpoint equations)
than the one we have defined above,
the fixpoint for the winning region in parity games will be sufficient for
the purposes of this work.
All mentioned fixpoint characterizations of winning regions allow for (symbolic) solution of games by computing nested fixpoints,
using for instance standard fixpoint iteration~\cite{Seidl96}, or using more recent techniques that employ
universal trees to obtain quasipolynomial algorithms to compute nested fixpoints~\cite{HausmannSchroeder21,ArnoldEA21}.\medskip

For the converse direction, that is, going from fixpoints to games,
it has recently been shown (in a more general framework than the one used in this work, see~\cite{BaldanKMP19} for details) that the semantics of fixpoint expressions
can be equivalently given in terms of parity games. The \emph{fixpoint game}
that is associated with a fixpoint expression 
\begin{align*}
e=\nu X_{k}.\,\mu X_{k-1}.\,\ldots\,.\,\mu X_1.\, f(X_1,\ldots, X_k)
\end{align*} over $U$ is the parity game $G_e=(V_\exists,V_\forall,E,\Omega:V\to\{1,\ldots, k\})$ played over the sets
$V_\exists=U$ and $V_\forall=\mathcal{P}(U)^{k}\cup (U\times[k])$ of nodes; 
moves and priorities are given by the following table,
abbreviating $(U_1,\ldots, U_k)$ by $\overline{U}$.
\begin{center}
\begin{tabular}{|c|c|c|c|}
\hline
Node & owner & priority & allowed moves \\
\hline
\hline
$u\in U$ & $\exists$ & $0$ & $\{\overline{U}\in\mathcal{P}(U)^k\mid u\in f(\overline{U})\}$\\
$\overline{U}\in \mathcal{P}(U)^k$ & $\forall$ & $0$ & $\{(u,i)\in U\times[k]\mid u\in V_i\}$\\
$(u,i)\in U\times[k] $ & $\forall$ & $i$ & $\{u\}$\\
\hline
\end{tabular}
\end{center}
When the game reaches a node $u\in U$, player $\exists$ thus has
to provide a tuple $\overline{U}=(U_1,\ldots, U_k)\in\mathcal{P}(U)^k$ such that
$u\in f(\overline{U})$. Player $\forall$ in turn can challenge the provided
tuple by moving to any $(u',i)$ such that $u'\in U_i$, and then continuing the game
at the node $u'\in U$. Crucially, such a sequence triggers the priority $i$ which
is an even number if $X_i$ belongs to a greatest fixpoint in $e$, and odd otherwise. We observe that the number of nodes in $G_e$ is exponential in $|U|$; however, player $\exists$ owns only $|U|$ of these nodes.

\begin{lemma}[\cite{BaldanKMP19}]
We have $e=\mathsf{Win}_{G_e}$.
\end{lemma}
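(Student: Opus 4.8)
The plan is to establish the two inclusions $e\subseteq\mathsf{Win}_{G_e}$ and $\mathsf{Win}_{G_e}\subseteq e$, reading $\mathsf{Win}_{G_e}$ as the winning region of player~$\exists$ restricted to the nodes $u\in U=V_\exists$. I would obtain the first inclusion directly by constructing a winning strategy for player~$\exists$, and derive the second from it by duality together with (positional) determinacy of parity games.

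For $e\subseteq\mathsf{Win}_{G_e}$ I would exploit the Knaster--Tarski approximation of the nested fixpoint. Unfolding each least fixpoint as an ordinal-indexed approximant, every $u\in e$ receives a \emph{signature} $\mathsf{sig}(u)$, i.e.\ a tuple $(\alpha_1,\alpha_3,\ldots,\alpha_{k-1})$ of ordinals recording, for each odd (least-fixpoint) level $i$, the least approximation stage at which $u$ enters the $i$-th least fixpoint relative to the currently fixed outer values. The strategy for player~$\exists$ at a node $u\in U$ is then to play the tuple $\overline{U}=(U_1,\ldots,U_k)$ whose components are exactly these fixpoint (resp.\ approximant) values, which by construction satisfies $u\in f(\overline{U})$. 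The decisive property is how the signature evolves when player~$\forall$ challenges component $i$ by moving to some $(u',i)$ with $u'\in U_i$, thereby triggering priority $i$: if $i$ is odd then $u'$ lies at a strictly earlier stage of the $i$-th least fixpoint, so the $i$-th entry strictly decreases while no higher entry increases; if $i$ is even then membership is merely preserved. I would then analyse an arbitrary infinite play by its highest priority $p$ occurring infinitely often. Were $p$ odd, the $p$-th signature entry would decrease infinitely often from some point on without ever increasing, contradicting well-foundedness of the ordinals; hence $p$ is even, the play satisfies the parity objective, and the strategy is winning.

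For the reverse inclusion I would appeal to duality. Since $k$ is even, the complement $\overline{e}=U\setminus e$ is described by the dual fixpoint expression $\mu X_k.\,\nu X_{k-1}.\,\ldots\,\nu X_1.\,\overline{f}(X_1,\ldots,X_k)$, where $\overline{f}$ is the De~Morgan dual of $f$; its associated fixpoint game is, up to renaming, obtained from $G_e$ by swapping the ownership of nodes and incrementing every priority by one, an operation that exchanges the two winning regions of a parity game. Running the signature argument of the previous paragraph in this dual game yields a winning strategy for player~$\forall$ in $G_e$ from every $u\in\overline{e}$, so $\overline{e}$ is disjoint from $\mathsf{Win}_{G_e}$. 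Because $G_e$ is a parity game and thus determined (indeed positionally, by~\cite{DziembowskiJW97}), every $u\in U$ is won by exactly one player, and the two inclusions combine to give $e=\mathsf{Win}_{G_e}$.

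The main obstacle I anticipate lies in the soundness direction: the signature is a tuple indexed by all least-fixpoint levels, and it must be maintained consistently as the game alternates between levels. In particular, one has to verify that player~$\exists$'s tuple choice can always realise a signature that, at and above the challenged level, does not increase (and strictly decreases when an odd level is challenged), while inner (lower-index) entries are correctly reset whenever an outer variable is challenged. Making this precise is the technical heart of the argument and is exactly what drives the lexicographic well-foundedness analysis; the duality step is then essentially formal, requiring only that the dual expression denotes the set-complement and that the priority shift flips the winner.
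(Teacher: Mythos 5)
First, a point of comparison: the paper does not prove this lemma at all --- it is imported verbatim from \cite{BaldanKMP19} --- so there is no in-paper proof to measure your argument against; what follows assesses the proposal on its own merits.

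Your soundness direction ($e\subseteq\mathsf{Win}_{G_e}$) is the standard ordinal-signature argument and is correct in outline; you also rightly flag the signature-maintenance lemma (no increase at and above the challenged level, strict decrease at a challenged odd level, resets below) as the technical core. The genuine gap is in the converse direction. You claim that the fixpoint game $G_{\overline{e}}$ of the dual expression is ``up to renaming'' the game obtained from $G_e$ by swapping ownership and shifting priorities. This is false: in $G_{\overline{e}}$ it is still player~$\exists$ who proposes the tuple (now constrained by $u\in\overline{f}(\overline{W})$, i.e.\ $u\notin f(U\setminus W_1,\ldots,U\setminus W_k)$) and still player~$\forall$ who challenges, whereas in the owner-swapped $G_e$ it is $\forall$ who proposes a tuple with $u\in f(\overline{U})$ and $\exists$ who challenges. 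These are structurally different games, and no renaming identifies them (componentwise complementation of tuples carries the edge relation of one onto the \emph{complement} of the edge relation of the other). The equality of winning regions that you need is true, but it is not ``essentially formal'': one must build $\forall$'s strategy in $G_e$ from $\exists$'s winning strategy $s$ in $G_{\overline{e}}$ by a simulation that uses monotonicity of $f$. Concretely, when $\exists$ plays $\overline{U}$ with $u\in f(\overline{U})$ in $G_e$ and $s(u)=\overline{W}$ with $u\notin f(U\setminus W_1,\ldots,U\setminus W_k)$, monotonicity forces $U_i\cap W_i\neq\emptyset$ for some $i$ (otherwise $U_i\subseteq U\setminus W_i$ for all $i$ would yield $u\in f(\overline{U})\subseteq f(U\setminus W_1,\ldots,U\setminus W_k)$, a contradiction); player~$\forall$ then challenges some $(u',i)$ with $u'\in U_i\cap W_i$, which is simultaneously a legal $\forall$-move in $G_{\overline{e}}$, and the priority shift turns a play won by $\exists$ in $G_{\overline{e}}$ into a play won by $\forall$ in $G_e$. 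With this simulation step supplied (plus determinacy, as you invoke), your architecture goes through; as written, the duality step fails.
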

While fixpoint games enable reasoning about fixpoint expressions by
reasoning about parity games, the exponential size of fixpoint games makes it unfeasible to evaluate fixpoint expressions
by solving the associated fixpoint games.

\section{Game Arenas with DAG Sub-structures}\label{sec:dag}

In this section we introduce notions related to cycle-free sub-structures in (parity) games that will be central to the
upcoming acceleration result.

\begin{definition}
Let $G=(V_\forall,V_\exists,E)$ be a game arena. A set
$W\subseteq V$ of nodes is a \emph{DAG} (directed acyclic graph) if it does not contain an $E$-cycle; then there is no play of $G$ that eventually stays within $W$ forever. 
A DAG is not necessarily connected, that is, it may consist of several subgames of $G$.
Given a DAG $W\subseteq V$, we write $V'=V\setminus W$ and refer
to the set $V'$ as \emph{real} nodes (with respect to $W$).
A DAG is \emph{positional} if for each existential player node $w\in W\cap V_\exists$ in it,
there is exactly one real node $v\in V'$ from which $w$ is reachable without visiting
any other real node.
\end{definition}

We will be interested in parity games over game arenas with DAG structures, and
in the computation of attraction (that is, alternating reachability) within such DAGs.
To this end, we introduce the following notation of DAG attraction in parity games.

\begin{definition}
Let $G=(V_\forall,V_\exists,E,\Omega)$ be a parity game with priorities
$\{1,\ldots,k\}$.
Given a DAG $W$ and $k$ sets $\overline{V}=(V_1,\ldots, V_k)$ of real nodes and a real node $v\in V'$, we say that player $\exists$ can \emph{attract} to $\overline{V}$ from $v$ within $W$
if there is a positional strategy $\sigma$ for player $\exists$ such that for all plays $\pi$ that start at $v$ and adhere to $\sigma$, the first node $v'$ in $\pi$ such that $v'\neq v$ and
$v'\in V'$ is contained in $V_p$, where $p$ is the maximal priority that is visited
by the part of $\pi$ that leads from $v$ to $v'$.

Given a DAG $W$, we define the \emph{DAG attractor function}
$\mathsf{DAttr}_W:\mathcal{P}(V')^k\to \mathcal{P}(V')$ by
\begin{equation}
\mathsf{DAttr}_W(\overline{V})=\{v\in V\mid
\text{player $\exists$ can attract to $\overline{V}$ from $v$ within $W$}\}\label{eq:dag}
\end{equation}
for $\overline{V}=(V_1,\ldots,V_k)\in \mathcal{P}(V')^k$.
We assume a bound $t_{\mathsf{Attr}_W}$ on the runtime of computing, 
for any input $\overline{V}\in\mathcal{P}(V')^k$, the dag attractor of $\overline{V}$ through $W$.

\end{definition}

We always have $t_{\mathsf{Attr}_W}\leq |E|$, using a least fixpoint computation to check for alternating reachability, thereby possibly exploring all DAG edges of the game. The method that we propose in the next section 
can play out its strength best in parity games that have large DAGs that can
be efficiently evaluated (more formally, this typically means that we have both $|V'|<\log|V|$ and $t_{\mathsf{Attr}_W}<\log |V|$).

\section{Large-step Solving for Parity Games}\label{sec:solve}

In this section we show that a parity game with DAG sub-structure $W$
can be solved by computing a fixpoint of the DAG attractor function, using
$V'=V\setminus W$ as domain of the fixpoint computation, rather than $V$. 

So let $G=(V_\exists,V_\forall,E,\Omega)$ be a parity game with (positional)
DAG $W$.
Recall (from Lemma~\ref{lem:parityfp}) that the winning region in $G$ is given by
\begin{align*}
\mathsf{Win}_{G}=\nu X_{k}.\,\mu X_{k-1}.\,\ldots\,.\,\mu X_1.\, 
\bigvee_{1\leq i\leq k} \Omega^{-1}(i)\wedge\mathsf{Cpre}(X_i),
\end{align*}
where $\mathsf{Cpre}$ operates on subsets $X_i$ of $V$. We show that
\begin{equation}
\mathsf{Win}_G\cap V'=\nu Y_{k}.\,\mu Y_{k-1}.\,\ldots\,.\,\mu Y_1.\,
\mathsf{DAttr}_W(Y_1,\ldots,Y_k)\label{eq:sfp},
\end{equation}
pointing out that  $\mathsf{DAttr}_W(Y_1,\ldots,Y_k)$
is a function over subsets $Y_1,\ldots Y_k$ of $V'$.
Therefore Equation (\ref{eq:sfp}) shows how attention can be reduced to non-DAG nodes (from $V'$) when solving $G$; however, the price for this is that the function
$\mathsf{DAttr}_W$
of which the fixpoint is computed is complex since it computes multi-step attraction within a DAG
in place of one-step attraction as encoded by $\mathsf{Cpre}$.

We make this intuition precise and prove the main result (that is, Equation (\ref{eq:sfp})) as follows.

\begin{lemma}\label{lem:pgtrick}
Let $G$ be a parity game with priorities $1$ to $k$ and set $V$ of nodes, let $W$ be a positional DAG in $G$, and
let $V'=V\setminus W$ and $m:=|V'|$. Then
$\mathsf{Win}_G\cap V'$ can be computed with $\mathcal{O}(m^{\log k})$ computations of a DAG attractor;
if $k<\log m$, then $\mathsf{Win}_G\cap V'$ can be computed with a number of DAG attractor computations
that is polynomial in $m$.
\end{lemma}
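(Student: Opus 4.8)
The plan is to establish the set-equality in Equation~(\ref{eq:sfp}) first, and then analyze the cost of evaluating the right-hand nested fixpoint.

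\begin{proof}[Proof plan]
\textbf{Step 1: correctness of the accelerated fixpoint.}
The first task is to show
\[
\mathsf{Win}_G\cap V'=\nu Y_{k}.\,\mu Y_{k-1}.\,\ldots\,.\,\mu Y_1.\,
\mathsf{DAttr}_W(Y_1,\ldots,Y_k).
\]
I would prove this by relating the two nested fixpoints stage by stage. The key technical observation is that $\mathsf{DAttr}_W$ already folds multi-step alternating reachability through the DAG $W$ into a single application, so intuitively it plays the same role over $V'$ that the composite ``$\bigvee_i \Omega^{-1}(i)\wedge\mathsf{Cpre}(X_i)$'' combined with full attraction through $W$ plays over all of $V$. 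Concretely, I would argue that for any tuple $\overline{Y}=(Y_1,\dots,Y_k)$ of subsets of $V'$, the set $\mathsf{DAttr}_W(\overline{Y})$ equals the restriction to $V'$ of the set of nodes from which player~$\exists$ can force the play, travelling only through $W$, to re-enter $V'$ at a real node lying in $Y_p$ where $p$ is the maximal priority seen on the connecting segment. Because $W$ is a DAG, no play can remain in $W$ forever, so this re-entry is guaranteed and the ``max priority on the segment'' is well defined; this is exactly the clause that the definition of attraction within $W$ encodes. The positionality assumption on $W$ ensures that these DAG strategies can be glued into a single global positional strategy without conflicting choices at existential DAG nodes, which is what lets me transfer a winning strategy in the $V'$-game back to a winning strategy in $G$ and conversely. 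I would make this precise by an induction on the fixpoint nesting level $i$ from $1$ to $k$, showing at each level that $\sem{\cdot}_i$ of the accelerated expression coincides with the $V'$-restriction of $\sem{\cdot}_i$ of $\mathsf{parity}_G$, using monotonicity and the Knaster--Tarski characterization of each $\mathsf{LFP}$/$\mathsf{GFP}$ stage.

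\textbf{Step 2: cost of evaluating the accelerated fixpoint.}
Granting the equality, the statement reduces to bounding the number of calls to $\mathsf{DAttr}_W$ needed to evaluate a $k$-alternating nested fixpoint over a domain $\mathcal{P}(V')$ with $|V'|=m$. Here I would invoke the known quasipolynomial techniques for computing nested fixpoints referenced in the excerpt~\cite{HausmannSchroeder21,ArnoldEA21}: treating $\mathsf{DAttr}_W$ as the black-box monotone base function, the universal-tree / progress-measure based methods evaluate the expression using $\mathcal{O}(m^{\log k})$ evaluations of that base function, each evaluation being a single DAG attractor computation. This immediately yields the $\mathcal{O}(m^{\log k})$ bound. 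For the second claim, when $k<\log m$ one has $\log k<\log\log m$, so $m^{\log k}<m^{\log\log m}$, and more to the point the quasipolynomial bound collapses to a genuine polynomial in $m$: with $k$ this small the exponent $\log k$ is dominated and the number of attractor calls is bounded by a fixed polynomial in $m$ (equivalently, standard fixpoint iteration over the $k$ nested alternations already gives a bound polynomial in $m$ when $k=O(\log m)$). I would spell out the arithmetic of substituting $k<\log m$ into the generic bound to confirm polynomiality.

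\textbf{Main obstacle.}
The genuinely delicate part is Step~1, and within it the back-and-forth strategy transfer across the DAG. The two directions are: from a winning strategy in $G$, extract DAG attraction witnesses that certify membership in $\mathsf{DAttr}_W(\overline{Y})$ at each stage; and conversely, compose the local positional DAG strategies produced by the accelerated fixpoint into one positional strategy that is globally winning on $G$. The subtlety is that a single real node may reach several DAG nodes, and plays may weave through $W$ multiple times before settling; I must verify that the maximal-priority-on-segment bookkeeping built into $\mathsf{DAttr}_W$ correctly matches the priority a parity play actually accumulates, and that the positionality of $W$ (one real entry point per existential DAG node) is exactly the hypothesis that rules out the otherwise problematic case where $\exists$'s DAG choices would have to depend on history. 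Handling the interaction between this reentry bookkeeping and the $\mu$/$\nu$ alternation at each of the $k$ levels is where the proof requires the most care.
\end{proof}
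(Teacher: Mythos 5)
Your overall architecture (prove Equation~(\ref{eq:sfp}) first, then pay for its evaluation with quasipolynomially many calls to $\mathsf{DAttr}_W$ via \cite{HausmannSchroeder21}) matches the paper, and the first claim of your Step~2 is handled exactly as the paper handles it. But Step~1, which you rightly identify as the heart of the matter, has a genuine gap. You propose an induction on the nesting level $i$ showing that $\sem{d}_i$ ``coincides with the $V'$-restriction of $\sem{\mathsf{parity}_G}_i$.'' As stated this does not even typecheck: $\sem{d}_i$ takes arguments $Y_{i+1},\ldots,Y_k\subseteq V'$ while $\sem{\mathsf{parity}_G}_i$ takes arguments over $V$, and you never specify which arguments are to be compared. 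The natural choice, plain restriction $Y_j=X_j\cap V'$, cannot be the right correspondence: $\mathsf{DAttr}_W(\overline{Y})$ entangles all $k$ components at once (which exit set is consulted depends on the maximal priority along the entire DAG segment), whereas one unfolding step of $\mathsf{parity}_G$ consults the single variable $X_i$ named by the current node's priority, so a segment through $W$ of length $\ell$ corresponds to $\ell$ interleaved unfoldings at varying levels. The intermediate stages of the two expressions therefore do not line up level by level, and constructing the right ``completion'' of $\overline{Y}$ to a tuple over $V$ is precisely the hard work your sketch leaves out. Moreover, your ``main obstacle'' paragraph speaks of transferring strategies back and forth, but inside a purely order-theoretic Knaster--Tarski induction there are no strategies to transfer. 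The paper resolves exactly this tension with a tool you never invoke: the fixpoint game of \cite{BaldanKMP19} (the lemma $e=\mathsf{Win}_{G_e}$ in Section~\ref{sec:gamesfp}). It converts the accelerated expression $d$ into a parity game $G_d$ whose $\exists$-nodes are exactly $V'$, and then proves $\mathsf{Win}_{G_d}=\mathsf{Win}_G\cap V'$ by a two-way strategy translation in which one round $v\to\overline{V}\to(v',p)\to v'$ of $G_d$ corresponds to one DAG segment of a play of $G$, the priority bookkeeping matching by construction. Without the fixpoint game (or an equivalent from-scratch Walukiewicz-style argument for $d$), your strategy-transfer intuition has no formal object to attach to.

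A secondary error sits in Step~2's second claim. Substituting $k<\log m$ into $\mathcal{O}(m^{\log k})$ gives $m^{\log\log m}$, which is still superpolynomial, so the polynomial bound does not ``collapse'' out of the quasipolynomial one by arithmetic; and your parenthetical fallback is also wrong, since naive fixpoint iteration costs on the order of $m^k$ evaluations, which at $k\approx\log m$ is $m^{\log m}$, again superpolynomial. The polynomial bound for $k<\log m$ is a separate theorem of \cite{HausmannSchroeder21} (universal trees of polynomial size when the number of priorities is logarithmic relative to the domain), and it is that separate result, not a substitution into the generic bound, that the paper cites.
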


\begin{proof}
We show that the winning regions in $G$ can be computed by a fixpoint expression
(with alternation-depth $k$) of the DAG attractor function given in
Equation (\ref{eq:dag}).
Without loss of generality, assume that $k$ is an even number.

We define the fixpoint expression
\begin{align*}
d&=\nu X_{k}.\,\mu X_{k-1}.\,\ldots\,.\,\mu X_1.\,
\mathsf{DAttr}_W(X_1,\ldots,X_k),
\end{align*}
noting that the fixpoint variables $X_i$ in $d$ range over subsets of $V'$ rather than over 
subsets of $V$.
It has been shown in~\cite{HausmannSchroeder21} how universal trees can 
be used to compute the nested fixpoint $d$
with $\mathcal{O}(m^{\log k})$ (that is, quasipolynomially many) iterations of the function
$\mathsf{DAttr}_W$.
It follows from the results of the same paper that if $k<\log m$, then $d$
can be computed with a polynomial (in $m$) number of computations of a dag attractor.
It remains to show that $d$ is the winning region (restricted to $V'$)
of player $\exists$ in $G$.
\begin{itemize}
	\item[$\Rightarrow$]

     Let $s$ be a strategy with which player $\exists$ wins from all nodes in their
     winning region in $G_d$.
     We define a positional strategy $t$ for player $\exists$ in $G$. The definition
     proceeds in steps that lead from a real node to the next real nodes. 
     Given a real node $v\in V'$ that is contained in the winning region of player $\exists$
     in $G_d$, we have $s(v)=\overline{V}\in\mathcal{P}(V')^{k}$ such that
     player $\exists$ can attract to $\overline{V}$ from $v$ within $W$. For the dag
     part of $G$ that starts at $v$, define the strategy $t$ to simply use the positional
     strategy that attracts to $\overline{V}$ from $v$ within $W$.
	
     To see that $t$ is a winning strategy, let $\pi$ be a play of $G$ that is compatible
     with $t$. Then $\pi$ induces a play $\tau$ of $G_d$ that is compatible with 
     $s$: For any two positions $i$ and $i'$ such that $i<i'$,
     both $\pi(i)$ and $\pi(i')$ are real nodes and there is no $i<j<i'$ such that $\pi(j)$ is 
     a real node, let $p$ be the maximal priority that is visited by the partial play
     from $\pi(i)$ to $\pi(i')$.
     By construction we have $s(\pi(i))=(V_1,\ldots,V_p,\ldots,V_k)$ where $V_p$ contains
     $\pi(i')$. Thus the partial play of $G$ that leads from $\pi(i)$ to $\pi(i')$
     induces the partial play $\tau_i=\pi(i) s(\pi(i)) (\pi(i'),p) \pi(i')$ of $G_d$.
     We define $\tau$ to be the concatenation of all partial plays $\tau_i$.
     By construction, $\tau$ is compatible with $s$; as $s$ is a winning strategy,
     player $\exists$ wins $\tau$. Furthermore,
     the maximal priorities that are visited infinitely often in $\tau$ and $\pi$ agree.
     Thus player $\exists$ wins $\pi$, as required.
	
	\item[$\Leftarrow$]
	
     Let $t$ be a positional strategy for player $\exists$ in $G$ with which they win
     from all nodes in their winning region. We define a positional strategy $s$ for
     player $\exists$ in $G_d$ as follows. For
     each node $v\in V'$ that is contained in the winning region of player $\exists$
     in $G$, $t$ yields a positional strategy that attracts from $v$
     to some $\overline{V}=(V_0,\ldots,V_k)\in\mathcal{P}(V')^{k}$ within $W$:
     For all $v'$ and all $0\leq p\leq k$
     such that there is a partial play of $G$ that is compatible with $t$, starts at
     $v$, ends at $v'$, and in between visits only nodes from $W$, and in which the
     maximal priority that is visited is $p$, add $v'$ to $V_p$. Put $s(v)=\overline{V}$.
     As $v$ attracts from $v$ to $\overline{V}$ within $W$ by construction, $s$
     is a valid strategy.
     
     To see that $s$ is a winning strategy, let $\tau$ be a play of $G_d$
     that starts at a real node from the winning region of player $\exists$ in $G$ 
     and that is compatible with $s$. For each partial sub-play
     $v_i s(v_i) (v_{i+1},p_i) v_{i+1}$ of $\tau$, we define $\pi_i$
     to be some partial play of $G$ that is compatible
     with $t$, starts at $v_i$, ends at $v_{i+1}$, 
     and in between visits only nodes from $W$, and in which the
     maximal priority that is visited is $p_i$. Such a play exists by construction of $s$.
     Let $\pi$ denote the concatenation of all $\pi_i$.
     Again, the maximal priorities that are visited infinitely often in $\tau$ and $\pi$
     agree. As $t$ is a winning strategy, player $\exists$ wins both $\tau$ and $\pi$.

\end{itemize}
\end{proof}

In case that $m<\log n$ and $t_{\mathsf{Attr}_W}\in \mathcal{O}(\log n)$ (that is, DAG attractability can be decided without exploring most of the DAG nodes), this trick leads to exponentially faster game solving.\medskip

\section{Large-step Solving for Emerson-Lei Games}\label{sec:el}

In this section, we show how games with an Emerson-Lei objective
can be transformed to equivalent parity games, using an economic variant of the later-appearance-record (LAR) construction; we also show that this transformation
preserves dag sub-structures. Thereby we enable usage of the acceleration
method for parity games from the previous section also for Emerson-Lei games.

The LAR reduction annotates game nodes with permutations that act as a memory
and record the order in which colors have been recently visited; this allows
to identify the set of colors that is visited infinitely often. A parity
condition is used to check whether this set satisfies the Emerson-Lei objective.
A slight difference to previous LAR constructions for Emerson-Lei automata and games~\cite{RenkinDP20,HunterD05} is that our reduction moves at most one color within the LAR in each game step, thereby allowing to bound the branching of memory updates along DAG parts of games.

Before we present the formal reduction, we first fix a set $C$ of colors and introduce notation for permutations over $C$.
We let $\Pi(C)$ denote the set of permutations over $C$,
and for a permutation $\pi\in\Pi(C)$ and a position $1\leq i\leq |C|$, we let
$\pi(i)\in C$ denote the element at the $i$-th position of $\pi$.
Let $\pi_0$ be some fixed element of $\Pi(C)$.
For $D\subseteq C$ and $\pi\in\Pi(C)$, we let $\pi@D$ denote the permutation that is
obtained from $\pi$ by moving \emph{the} element of $D$ that occurs at the right-most position in 
$\pi$ to the front of $\pi$; for instance, for $C=\{a,b,c,d\}$ and $\pi=(a,d,c,b)\in\Pi(C)$,
we have $\pi@\{a,d\}=\pi@\{d\}=(d,a,c,b)$ and $(d,a,c,b)@\{a,d\}=\pi$.
Crucially, restricting the reordering to single colors, rather than sets of colors, ensures that for 
each $\pi\in \Pi(C)$, there are only $|C|$ many $\pi'$ such
that $\pi@D=\pi'$ for some $D\subseteq C$. 
Therefore, the branching in the reduced game is bounded in such a way
that a DAG with $o$ exit points and in which every path from the entry point to
some exit point visits at most one node with non-empty set of colors, is reduced to a
DAG that again has $o$ exit points.
 Given a permutation $\pi\in\Pi(C)$ and an index $1\leq i\leq |C|$, we furthermore let  $\pi[i]$ denote the set of colors that occur
in one of the first $i$ positions in $\pi$.

\begin{definition}\label{defn:lar}
Let $G=(V_\exists,V_\forall,E,\gamma_C,\varphi_C)$ be an Emerson-Lei game with set $C$ of colors.
We define the parity game
\begin{align*}
P(G)=(V_\exists\times\Pi(C),V_\forall\times\Pi(C),E',\Omega)
\end{align*}
by putting
$E'(v,\pi)=\{(w,\pi@\gamma_C(v))\mid (v,w)\in E\}$
for $(v,\pi)\in V\times\Pi(C)$, and
\begin{align*}
\Omega(v,\pi)=\begin{cases}
2p & \pi[p]\models\varphi_C\\
2p+1 & \pi[p]\not\models\varphi_C
\end{cases}
\end{align*}
for $(v,\pi)\in V\times\Pi(C)$. Here, $p$ denotes the right-most position in $\pi$
that contains some color from $\gamma_C(v)$.

\end{definition}

\begin{lemma}\label{lem:LAR}
For all $v\in V$, we have $v\in \mathsf{Win}_G$ if
and only if $(v,\pi_0)\in\mathsf{Win}_{P(G)}$.
\end{lemma}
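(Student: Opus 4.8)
The plan is to reduce the claim to a single combinatorial fact about how the later-appearance record evolves along a play, and then transfer strategies across the two games using this fact. Throughout I write a play of $G$ as $\rho=v_0v_1\ldots$ and reserve $\pi,\pi_i$ for permutations. First I would observe that plays are in bijection: every play $\rho=v_0v_1\ldots$ of $G$ starting at $v$ lifts to a unique play $\Phi(\rho)=(v_0,\pi_0)(v_1,\pi_1)\ldots$ of $P(G)$ starting at $(v,\pi_0)$, where $\pi_{i+1}=\pi_i@\gamma_C(v_i)$, and conversely every play of $P(G)$ from $(v,\pi_0)$ projects to a play of $G$ from $v$ by forgetting the permutation component. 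Since $E'$ is defined so that the permutation component is uniquely determined by the history and the move in the first component, $\Phi$ is a bijection that respects node ownership and the legality of moves.

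Given this bijection, the proof reduces to the \emph{objective-equivalence} claim: a play $\rho$ of $G$ satisfies $\varphi_C$ if and only if $\Phi(\rho)$ satisfies the parity condition of $P(G)$. Granting this, both directions follow by routine strategy transfer. For the forward direction, a winning strategy $\sigma$ for player $\exists$ in $G$ from $v$ induces a strategy $\sigma'$ in $P(G)$ from $(v,\pi_0)$ that plays exactly the first component prescribed by $\sigma$ on the projected history (the required permutation component is forced by $E'$); every $\sigma'$-play projects to a $\sigma$-play, hence satisfies $\varphi_C$, hence by objective-equivalence satisfies parity, so $\sigma'$ is winning. For the backward direction, since $P(G)$ is a parity game it admits positional winning strategies, and a winning strategy $\tau'$ from $(v,\pi_0)$ induces a strategy $\tau$ for player $\exists$ in $G$ with memory $\Pi(C)$ (the memory tracks the current permutation, initialised to $\pi_0$ and updated by $@\,\gamma_C$); every $\tau$-play from $v$ lifts to a $\tau'$-play, which wins parity and hence satisfies $\varphi_C$.

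The crux, and the step I expect to be the main obstacle, is the objective-equivalence claim, that is, analysing the limit behaviour of the record $\pi_i$. Fix an infinite play $\rho$ and let $I=\{c\in C\mid c\in\gamma_C(v_i)\text{ for infinitely many }i\}$, so that $\rho\models\varphi_C$ iff $I\models\varphi_C$. I would prove two stabilisation facts. First, the colors of $I$ eventually occupy the leftmost $|I|$ positions of $\pi_i$: after the last step at which any color outside $I$ is seen, such a color can only drift rightward (fronting a color never moves a non-fronted color left), while each color of $I$ is moved to the front infinitely often and, once in front of some fixed out-of-$I$ color, provably stays in front of it; hence there is a time $T$ with $\pi_i[\,|I|\,]=I$ for all $i\geq T$. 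Second, letting $p_i$ denote the right-most position of $\pi_i$ holding a color of $\gamma_C(v_i)$, one has $p_i\leq|I|$ for $i\geq T$, and $p_i=|I|$ for infinitely many $i$: otherwise the color eventually frozen at position $|I|$ would lie in $I$ yet never be touched, contradicting the definition of $I$ (here the single-color variant of the record is exactly what keeps position $|I|$ stable). Combining the two facts, at the infinitely many late steps with $p_i=|I|$ the priority is $2|I|$ if $I\models\varphi_C$ and $2|I|+1$ otherwise, while all steps with $i\geq T$ carry priority at most $2|I|+1$; so the maximal priority seen infinitely often is even exactly when $I\models\varphi_C$, which is precisely the objective-equivalence claim. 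The delicate point throughout is that the record moves only one color per step, so I would take care to verify that the standard set-based stabilisation argument still goes through in this economical variant.
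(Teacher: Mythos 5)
Your proof is correct, and its core combinatorial content -- the stabilisation analysis of the later-appearance record -- coincides with the paper's; but you organise the argument around a different decomposition, and the difference is worth noting. The paper proves the two directions separately, in each one constructing an explicit strategy with memory $\Pi(C)$ (via $\mathsf{update}$/$\mathsf{move}$ functions) and then redoing the play analysis for plays compatible with that strategy; in particular, its forward direction needs the external fact, cited from~\cite{HunterD05}, that player $\exists$ has winning strategies with memory $\Pi(C)$ in Emerson-Lei games. You instead observe that the permutation component of $P(G)$ evolves deterministically, so plays of $G$ from $v$ and plays of $P(G)$ from $(v,\pi_0)$ are in bijection, and you factor out a strategy-free \emph{objective-equivalence} lemma: a play satisfies $\varphi_C$ iff its lift satisfies the parity condition. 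Strategy transfer then becomes routine in both directions, the forward direction works for an \emph{arbitrary} winning strategy (so the appeal to~\cite{HunterD05} disappears), and the record analysis is done once rather than twice. Your version of that analysis is also tighter than the paper's: where the paper asserts that the $p$-th record entry ``cycles fairly'' through the infinitely occurring colors, you derive the infinitely-often visits to position $|I|$ by the frozen-color contradiction (a color stuck at position $|I|$ would lie in $I$ yet force $p_i=|I|$ whenever it is seen), which is the cleaner way to get exactly what is needed. What the paper's route buys in exchange is an explicit, implementable strategy correspondence (concrete memory structure and update maps), which matters if one wants to extract winning strategies rather than just decide the winner. One small point to tighten in your write-up: you assert that every color of $I$ is fronted infinitely often; this is true but not immediate, since the fronted color is the \emph{rightmost} visited one -- it follows by the same style of argument you use elsewhere (a color of $I$ never fronted has non-decreasing, hence eventually constant, position, and at the next visit of that color the update would have to front it or move it right, a contradiction). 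The paper is no more rigorous on this point, so this is a presentational gap, not a mathematical one.
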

\begin{proof}
For a finite play $\tau=(v_0,\pi_0)(v_1,\pi_1)\ldots (v_n,\pi_n)$
of $P(G)$,
we let $p(\tau)$ denote the 
right-most position in $\pi_0$ such that $\pi_0(p(\tau))\in\gamma_C(v_ 0v_1\ldots v_n)$. By slight abuse of notation, we let $\Omega(\tau)$ denote
the maximal $\Omega$-priority that is visited in $\tau$, having 
$\Omega(\tau)=2(p(\tau))$ if $\pi_0[p(\tau)]\models\varphi_C$
and
$\Omega(\tau)=2(p(\tau))+1$ if $\pi_0[p(\tau)]\not\models\varphi_C$.
\begin{itemize}
	\item[$\Rightarrow$] Let $\sigma=(\Pi(C),\mathsf{update}_\sigma,\mathsf{move}_\sigma)$ be a strategy 
with memory $\Pi(C)$ for player $\exists$ in $G$
with which they win every node from their winning region.
It has been shown in a previous LAR reduction for Emerson-Lei games~\cite{HunterD05} that winning strategies
with this amount of memory always exist.
We define a strategy $\rho=(\Pi(C),\mathsf{update}_\rho,\mathsf{move}_\rho)$
with memory $\Pi(C)$ for player $\exists$ in $P(G)$
by putting $\mathsf{update}_\rho(\pi,((v,\pi'),(w,\pi'')))=\mathsf{update}_\sigma(\pi,(v,w))$
and 
\begin{equation*}
\mathsf{move}_\rho((v,\pi'),\pi)=(\mathsf{move}_\rho(v,\pi),\pi@\gamma_C(v)).
\end{equation*} 
Thus $\rho$ updates the memory and picks moves just as 
$\sigma$ does, but also updates the permutation component in $P(G)$ according to the taken moves; hence $\rho$ is a valid strategy.

We show that $\rho$ wins a node $(v,\pi)$ in $P(G)$ whenever $v$ is in the winning region
of player $\exists$ in $G$.
To this end, let $\tau=(v_0,\pi_0)(v_1,\pi_1)\ldots$ be a play of $P(G)$ that starts at 
$(v_0,\pi_0)=(v,\pi)$ and is compatible with $\rho$. By construction, $\theta=v_0 v_1\ldots$ is a
play that is compatible with $\sigma$. Since $\sigma$ is a winning strategy for player
$\exists$, we have $\gamma_C(\theta)\models \varphi_C$. There is a number $i$ 
such that all colors that appear in $\theta$ from position $i$ on occur infinitely often.
Let $p$ be the number of colors that appear infinitely often in $\theta$.
It follows by definition of $\pi@D$ for $D\subseteq C$ (which moves the single
right-most element of $\pi$ that is contained in $D$ to the very front of $\pi$),
that there is a position $j\geq i$ such that the left-most $p$ elements of $\pi_j$
are exactly the colors occurring infinitely often in $\pi$ (and all colors to
the right of $\pi_j(p)$ are never visited from position $j$ on). It follows
that from position $j$ on, $\tau$ never visits a priority larger than $2p$.
To see that $\tau$ infinitely often visits priority $2p$
we note that $\pi_{j'}[p]\models\varphi_C$
for any $j'>j$, so it suffices to show that $p$ infinitely
often is the rightmost position in the permutation component of $\tau$ 
that is visited. This is the case since, from position $j$ on, 
the $p$-th element in the permutation component
of $\tau$ cycles fairly through all colors that are visited infinitely often by $\theta$.

	\item[$\Leftarrow$]
	
Let $\rho$ be a positional strategy for player $\exists$ in $P(G)$
with which they win every node from their winning region.
We define a strategy $\sigma=(\Pi(C),\mathsf{update}_\sigma,\mathsf{move}_\sigma)$
with memory $\Pi(C)$ for player $\exists$ in $G$
by putting $\mathsf{update}_\sigma(\pi,(v,w))=\pi@\gamma_C(v)$
and $\mathsf{move}_\sigma(v,\pi)=w$ where
$w$ is such that $\rho(v,\pi)=(w,\pi@\gamma_C(v))$. Thus $\sigma$ updates the memory and 
picks moves just as plays that follow $\rho$ do.

We show that $\sigma$ wins a node $v$ in $G$ whenever $(v,\pi)$ is in the winning region
of player $\exists$ in $P(G)$.
To this end, let $\theta=v_0 v_1\ldots$ be a play of $G$ that starts at 
$v_0=v$ and is compatible with $\sigma$. By construction, $\theta$
induces a play $\tau=(v_0,\pi_0) (v_1,\pi_1)\ldots$ with $(v_0,\pi_0)=(v,\pi)$
and $\pi_{i+1}=\pi_{i}@\gamma_C(v_i)$ for $i\geq 0$
that is compatible with $\rho$. Since $\rho$ is a winning strategy for player
$\exists$, the maximal priority in it is even (say $2p$).
Again, $p$ is the position such that the left-most $p$ elements in the permutation
component of $\rho$ from some point on contain exactly the colors that are
visited infinitely often by $\theta$. It follows that $\gamma_C(\theta)\models\varphi_C$.
\end{itemize}
\end{proof}
The LAR reduction from Definition~\ref{defn:lar} preserves DAG sub-structures:
\begin{lemma}
Let $G$ be an Emerson-Lei games with set $C$ of colors.
If $W$ is a positional DAG in $G$, then $W\times \Pi(C)$ is a positional
DAG in $P(G)$.
\end{lemma}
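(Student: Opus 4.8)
The plan is to check the two defining properties of a positional DAG in turn: first that $W\times\Pi(C)$ is acyclic in $P(G)$, and then the uniqueness-of-source condition for its existential nodes. The acyclicity half I expect to be routine, so I would dispatch it first and concentrate the effort on the source-uniqueness half.

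For acyclicity I would argue by projecting to the first component. By Definition~\ref{defn:lar} a move $(w,\pi)\to(w',\pi')$ of $P(G)$ exists only if $(w,w')\in E$, so any $E'$-cycle lying inside $W\times\Pi(C)$ projects onto a nontrivial closed $E$-walk contained in $W$, and hence onto an $E$-cycle in $W$. As $W$ is a DAG this is impossible, so $W\times\Pi(C)$ contains no $E'$-cycle. The same projection also shows that the real nodes of $P(G)$ are exactly the pairs $(v,\pi)$ with $v\in V'$, since both the owner and the DAG-membership of a node of $P(G)$ depend only on its first component.

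For positionality, I would fix an existential node $(w,\pi)$ with $w\in W\cap V_\exists$ and consider any real node $(v',\pi')$ from which $(w,\pi)$ is reachable within $W\times\Pi(C)$. Projecting the witnessing path gives a path $v'\to\cdots\to w$ in $G$ that stays in $W$ except at its first node, so its only real node is $v'$; positionality of $W$ in $G$ then forces $v'$ to be the unique real source $v$ of $w$. It remains to pin down the permutation component. Here I would exploit that the update in $P(G)$ is deterministic and, crucially, depends only on the node being left and not on the chosen successor, so that along any path the permutation is obtained from the source permutation by successively applying the single-color reorderings $\,@\,\gamma_C(\cdot)$ of the visited nodes; thus from a fixed source $(v,\pi')$ each first-component node is reached with a determined permutation.

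The main obstacle is precisely this last step: showing that $\pi$ forces the source permutation $\pi'$. The delicate point is that the operation $\pi'\mapsto\pi'@D$ is not injective in general, so a priori distinct source permutations could be transported to the same $\pi$ along (possibly distinct) passages through $W$; ruling this out is exactly where the single-color-move discipline of Definition~\ref{defn:lar} must be used. I would lean on the bound that, for each permutation, only $|C|$ permutations arise as $@$-images, combined with the positional (tree-like) and acyclic structure of $W$, to argue that the single color moved at each DAG step—and hence the inverse of each update along the reverse passage back to $v$—is determined. I expect this bookkeeping, reconstructing the colors moved step by step so as to invert the permutation update and recover a unique $\pi'$, to be the technically demanding core of the argument.
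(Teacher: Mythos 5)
Your acyclicity argument and your treatment of the first component are correct, and together they already contain essentially everything the paper's own proof consists of: the paper dispatches the whole lemma in one sentence, observing that $P(G)$ merely annotates nodes of $G$ with memory, so that every cycle (and every DAG-internal path) in $P(G)$ projects to a corresponding cycle (path) in $G$. In particular, the paper never attempts the step you single out as the ``technically demanding core'', namely recovering a unique source permutation $\pi'$ from the target permutation $\pi$.

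That deferred step is not merely demanding --- it fails, because the update $\pi'\mapsto\pi'@D$ is genuinely non-injective and cannot be inverted by any bookkeeping. Concretely, take $C=\{a,b\}$, a real node $v\in V'$ with $\gamma_C(v)=\{a\}$, and a move $(v,w)\in E$ with $w\in W\cap V_\exists$. Then $(a,b)@\{a\}=(a,b)$ and $(b,a)@\{a\}=(a,b)$, so both real nodes $(v,(a,b))$ and $(v,(b,a))$ of $P(G)$ have a move to the existential DAG node $(w,(a,b))$; that node is thus reachable from two distinct real nodes without visiting any other real node, and the source permutation is simply not determined. The bound you invoke --- at most $|C|$ permutations of the form $\pi@D$ for fixed $\pi$ --- controls the \emph{forward} branching of the memory update, not the number of \emph{preimages}, which can itself be as large as $|C|$. (Existence is equally problematic under a literal reading of ``exactly one'': in the same example $(w,(b,a))$ may have no real source at all.) So, read at the level of pairs, the positionality condition can fail in $P(G)$ even when $W$ is positional in $G$; what is actually provable --- and what the paper's one-line proof implicitly contents itself with --- is precisely the weaker fact you already established, that all real sources of $(w,\pi)$ share the same underlying node $v\in V'$. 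Your plan should therefore either stop at that weaker statement (flagging that the definition of positionality must be read modulo the memory component) or record that the pair-level claim is not attainable; as written, the core step of your proposal cannot be carried out.
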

\begin{proof}
The claim follows immediately since the LAR reduction just annotates game nodes with additional memory, meaning that each game node $v$ in $G$ is replaced with copies $(v,\pi)$ in $P(G)$ so that all cycles in $P(G)$ have
corresponding cycles in $G$. 
\end{proof}

\section{Conclusion}

We have presented a method for solving parity games with DAG sub-structures
by computing nested fixpoints over the non-DAG nodes only, thereby intuitively accelerating fixpoint computation by summarizing cycle-free parts during the solution
of parity games. This can 
significantly reduce the domain of the game solving process, and in 
cases where DAG sub-structures in addition can be evaluated without exploring all
nodes in them, it improves the time complexity of parity game solving.
Furthermore, we have proposed a later-appearance-record construction with linear branching on the memory values
that transforms Emerson-Lei games to parity games, and have shown that this transformation preserves DAG sub-structures,
enabling usage of the proposed method also for the solution of games
with general Emerson-Lei objectives.

%% The file named.bst is a bibliography style file for BibTeX 0.99c
%\clearpage
% \bibliographystyle{my}
%\bibliographystyle{splncs04}
\bibliography{lib}

\end{document}